\begin{document}
\title{Collecting Coded Coupons over Overlapping Generations\\[-3mm]}
\author{
\IEEEauthorblockN{Yao Li}
\IEEEauthorblockA{WINLAB, ECE Department\\ Rutgers University\\
 yaoli@winlab.rutgers.edu} \and
\IEEEauthorblockN{Emina Soljanin}
\IEEEauthorblockA{Bell Labs\\ Alcatel-Lucent\\
emina@alcatel-lucent.com} \and \IEEEauthorblockN{Predrag
Spasojevi\'{c}}
\IEEEauthorblockA{WINLAB, ECE Department\\ Rutgers University\\
spasojev@winlab.rutgers.edu} 

 \thanks{
This work is supported by NSF grant No. CNS 0721888.} }
\maketitle

\begin{abstract}
Coding over subsets (known as generations) rather than over all
content blocks in P2P distribution networks and other applications
is necessary for a number of practical reasons such as computational
complexity. A penalty for coding only within generations is an
overall throughput reduction. It has been previously shown that
allowing contiguous generations to overlap in a head-to-toe manner
improves the throughput. We here propose and study a scheme,
referred to as the {\it random annex code}, that creates shared
packets between any two generations at random rather than only the
neighboring ones. By optimizing very few design parameters, we
obtain a simple scheme that outperforms both the non-overlapping and
the head-to-toe overlapping schemes of comparable computational
complexity, both in the expected throughput and in the rate of
convergence of the probability of decoding failure to zero. We
provide a practical algorithm for accurate analysis of the expected
throughput of the random annex code for finite-length information.
This algorithm enables us to quantify the throughput vs.\
computational complexity tradeoff, which is necessary for optimal
selection of the scheme parameters.
\end{abstract}

\newtheorem{theorem}{Theorem}
\newtheorem{lemma}[theorem]{Lemma}
\newtheorem{claim}[theorem]{Claim}
\newtheorem{corollary}[theorem]{Corollary}
\newtheorem{remark}{Remark}

\section{Introduction}

In randomized network coding applications, such as P2P content
distribution \cite{avalanche} or streaming, the source splits its
content into blocks. For a number of practical reasons (e.g,
computational complexity and delay reduction, easier
synchronization, simpler content tracking), these blocks are further
grouped into subsets referred to as generations, and, in the coding
process at the source and within the network, only packets in the
same generation are allowed to be linearly combined. A penalty for
coding only within generations is an overall throughput reduction.
The goal of this work is to develop a strategy which allows
generations to overlap, and hence, to improve the throughput while
maintaining the benefits brought up by introducing generations.

Since the idea of coding over generations has been introduced by
Chou et al.\ in \cite{choupractical}, a number of issues concerning
such coding schemes have been addressed. Maymounkov et al.\ studied
coding over generations with {\em random scheduling} of the
generation transmission in \cite{petarchunked}. They referred to
such codes as \emph{chunked codes}. This random scheduling coding
scheme has also been studied in our recent work
\cite{nonOverlapping}. The tradeoff between the reduction in
computational complexity and the throughput benefits was addressed
in \cite{petarchunked,nonOverlapping}, and the tradeoff between the
reduction in computational complexity and the resilience to peer
dynamics in P2P content distribution networks was the topic of
\cite{niudiresilience}.

Here, we are particularly interested in recovering some of the
throughput that is lost as a consequence of coding over generations
by allowing the generations to overlap. It has been observed that
with random scheduling, some generations accumulate their packets
faster, and can be decoded earlier than others. If generations are
allowed to overlap, the decoding of some generations will reduce the
number of unknowns in those lagging behind but sharing packets with
those already decoded, and thus help them become decodable with
fewer received coded packets, consequently improving the throughput.
Coding with \emph{overlapping} generations was first studied in
\cite{queensoverlap} and \cite{carletonoverlap}.

In this work, we propose a coding scheme over generations that share
randomly chosen packets. We refer to this code as the \emph{random
annex code}. 
We demonstrate with both a heuristic analysis and simulations that,
assuming comparable computational complexity, with its small number
of design parameters optimized, the simple random annex coding
scheme outperforms the non-overlapping scheme, as well as a
``head-to-toe'' overlapping scheme. The head-to-toe scheme was the
topic of \cite{carletonoverlap}, in which only contiguous
generations overlap in a fixed number of information packets in an
end-around mode, and hence only contiguous generations can benefit
from each other in decoding.

The main contribution of this paper is an accurate and practical
evaluation of the expected throughput performance of the random
annex code for finite information lengths. We first anatomize the
overlapping structure of the generations to quantify the benefit of
previously decoded generations on those not yet decoded, namely, how
many fewer coded packets are needed for each generation to be
decoded. With this accomplished, we are then able to place the
overlaps in oblivion and analyze the throughput performance of
random annex codes under the coupon collection setting. We succeed
to optimize the scheme design parameters based on the evaluation of
the overhead  necessary to decode all the packets.


Our paper is organized as follows: In Section \ref{sec:model}, we
describe the coding scheme for the random annex code, and introduce
the coupon collector's model under which our analysis for code
throughput is studied. In Section \ref{sec:analysis}, we analyze the
overlapping structure of the random annex code, and give a practical
algorithm to evaluate the expected code performance for finite
information length. In Section \ref{sec:results}, we present our
numerical evaluation and simulation results which illustrate and
quantify the throughput vs.\ computational complexity tradeoff
brought up by our scheme. Section \ref{sec:conclusion} concludes.

\section{Coding over Generations} \label{sec:model}

\subsection{Coding over Overlapping Generations in Unicast}

We refer to the code we study as the \emph{random annex code}. This
section describes (a) the way the generations are formed,
(b) the encoding process, (c) the decoding algorithm, and (d)
how the computational complexity for the random annex code is
measured.

\setcounter{paragraph}{0}
\paragraph{Forming Overlapping Generations}\label{par:formgen}
We first divide file $\mathcal{F}$ into $N$ information packets $p_1, p_2, \dots,p_N$.
Each packet $p_i$ is represented as a column vector of $d$ information symbols in Galois
field $GF(q)$. We then form $n$ overlapping generations in two steps as follows:
\begin{enumerate}
  \item We partition the $N$ packets into sets $B_1,B_2, \dots, B_n$ of
  each containing $h$ consecutive packets. We refer to these $n=N/h$
  sets as \textit{base generations}.
Thus, $B_i=\{p_{(i-1)h+1},p_{(i-1)h+2},\dots,p_{ih}\}$ for
$i=1,2,\dots,n$. We assume that $N$ is a
multiple of $h$ for convenience. In practice, if $N$ is not a multiple of $h$,  we
set $n=\lceil N/h \rceil$ and assign the last $[N-(n-1)h]$ packets to
the last (smaller) base generation.
  \item To each base generation $B_i$, we add a random \emph{annex} $R_i$, consisting of
$l$ packets chosen uniformly at random (without replacement) from
the $N-h=(n-1)h$ packets in $\mathcal{F}\backslash B_i$. The base generation together
with its annex constitutes the
\emph{extended generation} $G_i=B_i\cup R_i$. The size of each $G_i$ is $g=h+l$.
Throughout this paper, unless otherwise stated, the term ``generation'' will refer to ``extended generation''
whenever used alone.
\end{enumerate}

The members of $G_i$ are enumerated as
$p^{(i)}_1,p^{(i)}_2,\dots,p^{(i)}_g.$

\paragraph{Encoding} The encoding process is oblivious to overlaps between
generations. In each transmission, the source first selects one of
the $n$ generations with equal probability. Assume $G_j$ is chosen.
Then the source chooses a coding vector $e=[e_1,e_2,\dots,e_{g}]^T$
with each entry chosen independently and equally probably from
$GF(q)$. A new packet $\bar{p}$ is then formed by linearly combining
packets from $G_j$ by $e$: $\bar{p}=\sum_{i=1}^{g}e_ip^{(j)}_{i}$.
The coded packet $\bar{p}$ is then sent over the communication link
to the receiver along with the coding vector $e$ and the generation
index $j$.

\paragraph{Decoding}
Decoding starts with a generation for which the receiver has
collected $(h+l)$ coded packets with linearly independent coding
vectors. The information packets making up this generation are
decoded by solving a system of $(h+l)$ linear equations in $GF(q)$
formed by the coded packets and the linear combinations of the
information packets by the coding vectors. Each decoded information
packet is then removed as an unknown from other generations' coded
packets in which it participates. Consequently, the number of
unknowns in all generations overlapping with those that are already
decoded is reduced, and some such generations may become decodable
even when no new coded packets are received from the source. Again,
the newly decoded generations resolve some unknowns of the
generations they overlap with, which in turn may become decodable
and so on. This is the mechanism through which the generation
overlapping potentially improves the throughput. We declare
successful decoding when all $N$ information packets have been
decoded.

\paragraph{Computational Complexity} The computational complexity for
encoding is $\mathcal{O}((h+l)d)$ (recall that $d$ is the number of
information symbols in each packet as defined in Part {\em a)}) per
coded packet for linearly combining the $(h+l)$ information packets
in each generation. For decoding, the largest number of unknowns in
the systems of linear equations to be solved is $(h+l)$, and
therefore the computational complexity is upper bounded by
$\mathcal{O}((h+l)^2+(h+l)d)$ per information packet.

While some may argue that the assumption of random scheduling of
generations deviates from reality, we put forward here a few motives
behind its adoption: (1) Locality: Uniformly random scheduling
assumes knowledge of least information, which to some extent
approximates the case with limited-visioned peer nodes in
large-scale systems; (2) Ratelessness: in the case of single-hop
multicast over erasure channels, the code throughput automatically
adapts to all erasure rates. Some previous works on coding over
generations, such as \cite{petarchunked} and \cite{carletonoverlap},
also assumed random scheduling.

We measure code throughput by the number of coded packets necessary
for decoding all the information packets.


\subsection{Coupon Collector's Problem} As in \cite{nonOverlapping},
we model the collection of coded packets from $n$ generations as the
sampling of a set of $n$ coupons without replacement. In the next
section, we will look into the overlapping structure of the extended
generations, and use our extension of the \emph{coupon collector's
brotherhood} problem\cite{doubledixiecup,brotherhood} described in
\cite{nonOverlapping} to evaluate the throughput performance of the
random annex code. We will also compare the performance of random
annex code to the \emph{overlapped chunked code} proposed in
\cite{carletonoverlap}, which has the generations overlap in a
``head-to-toe'' fashion. Note that the random annex code in fact
defines a code ensemble that encompasses the overlapped chunked
code.

\section{Overlapping Generations-An Analysis of Expected Performance}
\label{sec:analysis}
\subsection{Overlapping Structure}\label{subsec:analysis_structure}
The decoding of different generations becomes intertwined with each
other as generations are no longer disjoint. Our goal here is to
unravel the structure of the overlapping generations, in order to
identify the condition for successful decoding of random annex codes
over a unicast link.

In Claims \ref{thm:pi} through \ref{thm:overlap_gennum}, we study
the overlapping structure of the random annex code. Compared with
the head-to-toe overlapping scheme, an extended generation in the
random annex code overlaps more evenly with other generations.
Intuitively, this can help with code throughput when random
scheduling of generations is used.
\begin{claim}\label{thm:pi}
For any packet in a base generation $B_{k}$, the probability that it
belongs to annex $R_r$ for some $r\in\{1,2,\dots,n\}\backslash\{k\}$
is
\[\pi={{N-h-1}\choose{l-1}}/{{N-h}\choose{l}}=\frac{l}{N-h}=\frac{l}{(n-1)h},\]
while the probability that it does not belong to $R_r$ is $\bar{\pi}
= 1-\pi.$
\end{claim}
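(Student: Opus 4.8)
The plan is to treat this as a routine uniform-sampling computation, since by construction the annex $R_r$ is an $l$-subset drawn uniformly at random, without replacement, from the $N-h=(n-1)h$ packets of $\mathcal{F}\backslash B_r$. The only genuinely substantive point is to confirm that the fixed packet under consideration is even eligible for membership in $R_r$; everything after that is counting.

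First I would establish eligibility. Fix a packet $p\in B_{k}$ and a generation index $r\neq k$. Because the base generations $B_1,\dots,B_n$ partition $\mathcal{F}$ into disjoint blocks of $h$ consecutive packets, and $r\neq k$, we have $B_k\cap B_r=\varnothing$, so $p\notin B_r$, i.e.\ $p\in\mathcal{F}\backslash B_r$. Hence $p$ is one of the $N-h$ candidate packets from which $R_r$ is sampled, which is precisely what makes the membership event nontrivial.

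Next I would count the equiprobable outcomes. Since each $l$-subset of the $N-h$ eligible packets is chosen with equal probability, there are $\binom{N-h}{l}$ possible realizations of $R_r$. Those that contain the fixed packet $p$ are in bijection with the choices of the remaining $l-1$ members from the other $N-h-1$ eligible packets, so there are $\binom{N-h-1}{l-1}$ favorable realizations. Taking the ratio gives $\pi=\binom{N-h-1}{l-1}/\binom{N-h}{l}$, as asserted. (Equivalently, one could skip the counting entirely: by symmetry among the $N-h$ eligible packets, each is equally likely to be one of the $l$ selected, so the inclusion probability is immediately $l/(N-h)$.)

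Finally I would simplify. Cancelling the common factorials in the binomial ratio reduces it to $\pi=\frac{l}{N-h}$, and substituting the identity $N-h=nh-h=(n-1)h$ (which follows from $n=N/h$) yields $\frac{l}{(n-1)h}$. The complementary statement $\bar\pi=1-\pi$ is then immediate. I do not anticipate any real obstacle in this argument; the lone step that warrants care is the disjointness observation in the eligibility check, since it is what guarantees $p$ belongs to the sampling pool for $R_r$ and thus that the stated probability is the correct one.
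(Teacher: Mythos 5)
Your proof is correct and matches the paper's reasoning: the paper gives no separate proof of this claim precisely because the binomial ratio $\binom{N-h-1}{l-1}/\binom{N-h}{l}$ in the statement already encodes the same counting argument you spell out (favorable $l$-subsets containing the fixed packet over all $l$-subsets of the $N-h$ eligible packets). Your added eligibility check, noting that disjointness of base generations places the packet in the sampling pool for $R_r$, is a worthwhile clarification but not a departure from the paper's approach.
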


\begin{claim}\label{thm:representation}
Let $X$ be the random variable representing the number of
generations an information packet participates in. Then, $X=1+Y$
where $Y$ is ${\rm Binom}(n-1,\pi).$
\[E[X]=1+(n-1)\pi=1+\frac{l}{h},\] and \[Var[X]=(n-1)\pi\bar{\pi}.\]
\end{claim}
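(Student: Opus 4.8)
The plan is to prove Claim~\ref{thm:representation} by decomposing the count $X$ into the fixed membership of a packet in its own base generation plus a sum of indicator variables for the remaining generations, and then invoking Claim~\ref{thm:pi} to identify the distribution. First I would fix an arbitrary information packet $p$ and note that, by construction, $p$ lies in exactly one base generation, say $B_k$; this accounts for the deterministic $1$ in $X=1+Y$. The remaining contribution comes from the $n-1$ other extended generations $G_r$ with $r\neq k$: packet $p$ participates in $G_r$ if and only if it is drawn into the annex $R_r$, since $p\notin B_r$ for $r\neq k$. I would therefore define $Y=\sum_{r\neq k}\mathds{1}\{p\in R_r\}$, so that $X=1+Y$.

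The key step is to argue that $Y$ is ${\rm Binom}(n-1,\pi)$. By Claim~\ref{thm:pi}, each indicator $\mathds{1}\{p\in R_r\}$ is Bernoulli with success probability $\pi=l/((n-1)h)$. Summing over the $n-1$ eligible generations gives the parameter $n-1$. The one point requiring care is independence across the $n-1$ annexes: each annex $R_r$ is formed by an independent draw of $l$ packets uniformly without replacement from $\mathcal{F}\backslash B_r$, and the draws for distinct generations $r$ are carried out independently of one another. Hence the indicators $\{\mathds{1}\{p\in R_r\}\}_{r\neq k}$ are mutually independent, and $Y$ is a sum of $n-1$ i.i.d.\ Bernoulli$(\pi)$ variables, i.e.\ ${\rm Binom}(n-1,\pi)$.

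With the distribution of $Y$ established, the moment formulas follow from the standard binomial moments. For the mean,
\[
E[X]=1+E[Y]=1+(n-1)\pi=1+(n-1)\frac{l}{(n-1)h}=1+\frac{l}{h}.
\]
For the variance, since the constant $1$ does not affect dispersion,
\[
Var[X]=Var[Y]=(n-1)\pi\bar{\pi},
\]
using $\bar{\pi}=1-\pi$ from Claim~\ref{thm:pi}.

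The main obstacle I anticipate is not the algebra but the justification of independence: one must be explicit that ``without replacement'' refers only to draws \emph{within} a single annex, so that the $l$ packets of any fixed $R_r$ are distinct, while the annexes of different base generations are sampled independently. If instead the construction imposed some global disjointness or joint constraint across annexes, the indicators would be negatively correlated and $Y$ would not be exactly binomial; I would flag this as the assumption on which the clean binomial form rests, and confirm it against the generation-forming description in Section~\ref{sec:model}.
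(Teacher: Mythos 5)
Your proof is correct and follows exactly the route the paper intends: the paper states Claim~\ref{thm:representation} without an explicit proof, but its very phrasing ($X=1+Y$ with $Y$ binomial and $\pi$ taken from Claim~\ref{thm:pi}) encodes precisely your decomposition into deterministic membership in the packet's own extended generation plus independent Bernoulli$(\pi)$ indicators over the other $n-1$ annexes. Your care about independence — that ``without replacement'' constrains only the $l$ draws within a single annex, while distinct annexes are sampled independently — is the right point to make explicit, and it is consistent with the generation-forming construction in Section~\ref{sec:model}.
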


\begin{claim} \label{thm:multiplerepresentation}
In any generation of size $g=h+l$, the expected number of
information packets not present in any other generation is
$h\bar{\pi}^{(n-1)}\approx he^{-l/h}$ for $n\gg1$. The expected
number of information packets present at least once in some other
generation is \[l+h[1-\bar{\pi}^{(n-1)}]\approx
l+h\left[1-e^{-l/h}\right]<\min\{g,2l\}\] for $n\gg1$ and $l>0$.
\end{claim}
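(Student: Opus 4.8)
The plan is to establish the two claims separately, both building on the per-packet inclusion probability $\pi$ from Claim~\ref{thm:pi}. Consider a fixed generation $G_i = B_i \cup R_i$ of size $g = h+l$, and examine its information packets according to whether they also appear in some other generation $G_r$, $r \neq i$.

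First I would handle the $h$ packets of the base generation $B_i$. For any such packet $p$, Claim~\ref{thm:representation} tells us the total number of generations $p$ participates in is $1 + Y$ with $Y \sim {\rm Binom}(n-1,\pi)$, where the leading $1$ counts $G_i$ itself. Thus $p$ is present in \emph{no} other generation exactly when $Y=0$, which occurs with probability $\bar\pi^{\,n-1}$. By linearity of expectation the expected number of base packets of $G_i$ that are unique to $G_i$ is $h\,\bar\pi^{\,n-1}$, and using $\bar\pi = 1 - l/((n-1)h)$ one gets $\bar\pi^{\,n-1} = (1 - l/((n-1)h))^{n-1} \to e^{-l/h}$ as $n\to\infty$, giving the approximation $h e^{-l/h}$. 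The $l$ annex packets of $R_i$ are, by construction, drawn from other base generations, so each annex packet is automatically present in at least one other generation; none of them can be unique to $G_i$. Hence the count of packets unique to $G_i$ involves only the base packets, and the first assertion follows.

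For the second assertion, the number of packets in $G_i$ present at least once in some other generation is simply $g$ minus the number unique to $G_i$, i.e.\ $(h+l) - h\bar\pi^{\,n-1} = l + h(1 - \bar\pi^{\,n-1})$, which tends to $l + h(1 - e^{-l/h})$. It remains to verify the strict bound $l + h(1 - e^{-l/h}) < \min\{g, 2l\}$ for $l>0$. Since $1 - e^{-l/h} < 1$, we immediately have $l + h(1-e^{-l/h}) < l + h = g$. For the bound by $2l$, set $x = l/h > 0$ and observe that $l + h(1-e^{-x}) < 2l$ is equivalent to $h(1 - e^{-x}) < l = hx$, i.e.\ $1 - e^{-x} < x$, which holds for all $x>0$ by the elementary inequality $e^{-x} > 1 - x$. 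Taking the minimum of the two bounds completes the argument.

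The main obstacle is mostly bookkeeping rather than depth: one must be careful that the annex packets of $G_i$ contribute nothing to the ``unique'' count, so that the $h\bar\pi^{\,n-1}$ figure really is an expectation over base packets only, and that the limiting passage $\bar\pi^{\,n-1} \to e^{-l/h}$ is justified (this is where the $n\gg1$ hypothesis enters, via $(1 - l/((n-1)h))^{n-1}\to e^{-l/h}$). The strict inequalities are then routine applications of $1 - e^{-x} < x$ and $1 - e^{-x} < 1$, so the only genuine care needed is in the combinatorial accounting and in handling the expectation correctly through linearity.
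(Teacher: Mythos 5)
Your proof is correct and follows essentially the same approach as the paper's own argument: the per-packet probability $\bar{\pi}^{\,n-1}=\left(1-\tfrac{l}{(n-1)h}\right)^{n-1}\approx e^{-l/h}$ that a base packet avoids every other generation's annex, combined with linearity of expectation over the $h$ base packets. Your write-up is in fact somewhat more complete than the paper's, since you also make explicit that annex packets can never be unique to $G_i$ and you verify the bound $<\min\{g,2l\}$ via the elementary inequalities $1-e^{-x}<1$ and $1-e^{-x}<x$ for $x>0$.
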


\begin{claim}\label{thm:overlap_gennum}
The probability that two generations overlap is
$1-{{N-2h}\choose {l,l,N-2h-2l}}/{{N-h}\choose l}^2$. For
any given generation, the number of other generations it overlaps
with is then $${\rm Binom}\left(n-1,\left[1-{{N-2h}\choose
{l,l,N-2h-2l}}/{{N-h}\choose l}^2\right]\right).$$
\end{claim}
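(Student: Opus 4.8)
The plan is to compute the complementary event first: the probability that two fixed distinct generations $G_i$ and $G_j$ do \emph{not} overlap, and then subtract from one. The key structural observation is that the base generations partition $\mathcal{F}$, so $B_i\cap B_j=\emptyset$, whence
\[
G_i\cap G_j=(B_i\cap R_j)\cup(R_i\cap B_j)\cup(R_i\cap R_j).
\]
The two generations are thus disjoint precisely when all three of these sets are empty, i.e.\ when $R_i$ avoids $B_j$, $R_j$ avoids $B_i$, and $R_i\cap R_j=\emptyset$. Writing $C=\mathcal{F}\setminus(B_i\cup B_j)$ for the pool of the remaining $N-2h$ packets, the first two conditions force $R_i\subseteq C$ and $R_j\subseteq C$, and the third forces these two $l$-subsets of $C$ to be disjoint.

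Next I would count configurations. Since $R_i$ is drawn uniformly without replacement from the $N-h$ packets of $\mathcal{F}\setminus B_i$, and $R_j$ independently from the $N-h$ packets of $\mathcal{F}\setminus B_j$, the total number of ordered pairs $(R_i,R_j)$ is $\binom{N-h}{l}^2$. The favourable (disjoint) configurations are exactly the ordered pairs of disjoint $l$-subsets of $C$, of which there are $\binom{N-2h}{l}\binom{N-2h-l}{l}=\binom{N-2h}{l,l,N-2h-2l}$. Dividing gives the non-overlap probability, and its complement is the claimed pairwise overlap probability, which I call $p$.

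For the count of generations a fixed $G_k$ overlaps with, I would introduce indicators $\mathds{1}_j$ for the event $G_k\cap G_j\neq\emptyset$, $j\neq k$, and write the count as $\sum_{j\neq k}\mathds{1}_j$. By the symmetry of the random annex construction each $\mathds{1}_j$ is $\mathrm{Bernoulli}(p)$, so the sum has mean $(n-1)p$. The main obstacle is independence: all $n-1$ indicators share the single random annex $R_k$, so they are genuinely dependent, and the sum is $\mathrm{Binom}(n-1,p)$ only if this coupling is neglected. The honest statement is therefore that the binomial is the natural approximation obtained by treating the pairwise overlaps as independent trials, consistent with the heuristic nature of the analysis; I would note that conditioned on $R_k$ the dependence enters only through how $R_k$ meets the various $B_j$, and that for $l\ll N$ this coupling is weak, which is what justifies the binomial model adopted in the subsequent throughput analysis.
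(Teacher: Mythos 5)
Your derivation of the pairwise overlap probability is correct, and it is essentially the intended one: the paper states this claim \emph{without proof}, but the exact expression it displays, $\binom{N-2h}{l,l,N-2h-2l}\big/\binom{N-h}{l}^2$, is precisely your count of ordered pairs of disjoint $l$-subsets of $\mathcal{F}\setminus(B_i\cup B_j)$ divided by the $\binom{N-h}{l}^2$ equally likely ordered pairs of annexes, so your argument reconstructs the counting behind the formula. Your caveat on the second statement is also well taken, and there you are in fact more careful than the paper itself: conditioned on $R_k$, the $n-1$ overlap indicators are independent but not identically distributed --- the conditional overlap probability equals $1$ for those $j$ with $R_k\cap B_j\neq\emptyset$ and equals $1-\binom{N-2h-l}{l}\big/\binom{N-h}{l}$ otherwise --- so the count is a mixture (over $R_k$) of sums of non-identical Bernoulli variables, which has mean $(n-1)p$ but is not exactly ${\rm Binom}(n-1,p)$. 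The binomial law asserted in the claim is therefore itself only the approximation obtained by neglecting the coupling through the shared annex $R_k$, exactly as you say; since this distributional statement is used only descriptively (the throughput analysis rests on Theorem~\ref{thm:union_overlap} and Claim~\ref{thm:wait}, not on this claim), the looseness has no downstream consequence.
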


The following Theorem ~\ref{thm:union_overlap} gives the expected
overlap size $\Omega(s)$ between the union of $s$ generations and an
$(s+1)$th generation.
\begin{theorem}\label{thm:union_overlap}
For any $I\subset\{1,2,\dots,n\}$ with $|I|=s,$ and any $j\in
\{1,2,\dots,n\}\backslash I,$
\begin{align}
\Omega(s)=E[|\left(\cup_{i\in I} G_i\right)\cap
G_j|]&=
g\cdot \left[1-\bar{\pi}^s\right] +  s h \cdot \pi \bar{\pi}^s
\label{eq:union_overlap}\end{align}
 where $|B|$ denotes the cardinality of
set $B$, and $\pi$, $\bar{\pi}$ are as defined in Claim
\ref{thm:pi}.

When $n\rightarrow\infty$, if $\frac{l}{h}\rightarrow\alpha$ and
$\frac{s}{n}\rightarrow\beta$, and let $\omega(\beta)=\Omega(s)$,
then,
$\omega(\beta)\rightarrow h\left[(1+\alpha) \left(1 - e^{-\alpha\beta}\right)
+\alpha\beta e^{-\alpha\beta}\right].$
\end{theorem}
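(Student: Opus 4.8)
The plan is to evaluate $\Omega(s)$ by linearity of expectation over individual packets, organizing the computation around the fact that the base generations $B_1,\dots,B_n$ partition $\mathcal{F}$, so that every packet $p$ lies in exactly one base generation, whose index I will call $k(p)$. Writing
$$\Omega(s)=\sum_{p\in\mathcal{F}}\Pr\!\left[p\in G_j\cap\bigcup_{i\in I}G_i\right],$$
I would split this sum according to whether $k(p)=j$, $k(p)\in I$, or $k(p)\in\{1,\dots,n\}\setminus(I\cup\{j\})$. Because $j\notin I$, these three classes are disjoint and exhaust $\mathcal{F}$, and they contain $h$, $sh$, and $(n-1-s)h$ packets respectively.

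For each class I would compute the relevant probability using only two facts: that a fixed packet lies in a given foreign annex with probability $\pi$ (Claim \ref{thm:pi}), and that the annexes $R_1,\dots,R_n$ are drawn independently across generations. If $k(p)=j$ then $p\in B_j\subseteq G_j$ automatically, and since $k(p)\notin I$ the packet can enter $\bigcup_{i\in I}G_i$ only through an annex, which happens with probability $1-\bar{\pi}^{s}$ by independence across the $s$ annexes indexed by $I$. If $k(p)\in I$ then $p$ lies in $\bigcup_{i\in I}G_i$ automatically, and it meets $G_j$ only by landing in $R_j$, an event of probability $\pi$. If $k(p)$ is neither $j$ nor in $I$, then $p$ reaches $G_j$ only via $R_j$ and reaches $\bigcup_{i\in I}G_i$ only via the annexes indexed by $I$; since $j\notin I$ these involve disjoint, independent annex draws, so the joint probability factors as $\pi\,(1-\bar{\pi}^{s})$.

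Summing the three contributions weighted by the class sizes $h$, $sh$, $(n-1-s)h$ gives
$$\Omega(s)=h\left(1-\bar{\pi}^{s}\right)+sh\pi+(n-1-s)h\pi\left(1-\bar{\pi}^{s}\right).$$
Substituting the identity $l=(n-1)h\pi$ from Claim \ref{thm:pi} and collecting the terms that carry a factor $(1-\bar{\pi}^{s})$ then reduces this to $g\left(1-\bar{\pi}^{s}\right)+sh\pi\,\bar{\pi}^{s}$, which is exactly \eqref{eq:union_overlap}.

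For the limiting regime I would substitute $\pi=l/((n-1)h)$ and track the two combinations that survive: $g/h=1+l/h\to 1+\alpha$, and the product $s\pi=\frac{s}{n-1}\cdot\frac{l}{h}\to\alpha\beta$. Since $\pi\to 0$ while $s\pi\to\alpha\beta$, the expansion $\ln(1-\pi)=-\pi\,(1+\mathcal{O}(\pi))$ gives $s\ln(1-\pi)\to-\alpha\beta$, hence $\bar{\pi}^{s}\to e^{-\alpha\beta}$. Feeding these limits into $\Omega(s)/h=(g/h)\left(1-\bar{\pi}^{s}\right)+s\pi\,\bar{\pi}^{s}$ yields $\omega(\beta)/h\to(1+\alpha)\left(1-e^{-\alpha\beta}\right)+\alpha\beta\,e^{-\alpha\beta}$, the claimed limit. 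I do not anticipate a genuinely hard step: the exact formula is pure linearity-of-expectation bookkeeping, and the only points that demand care are (i) invoking $j\notin I$ to guarantee that the annex draws entering $G_j$ are independent of those entering $\bigcup_{i\in I}G_i$, and (ii) observing that the without-replacement structure within any single annex never intervenes, since each class contribution depends only on single-packet membership in distinct, independent annexes.
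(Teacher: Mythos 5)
Your proof is correct and follows essentially the same route as the paper: your three packet classes (base in $j$, base in $I$, base in neither) correspond exactly to the paper's set-theoretic decomposition $\mathcal{G}_s\cap G_{s+1}=\mathcal{R}_s\cap B_{s+1}+\mathcal{B}_s\cap R_{s+1}+(\mathcal{R}_s\backslash\mathcal{B}_s)\cap R_{s+1}$, and your three contributions $h(1-\bar{\pi}^s)$, $sh\pi$, $(n-1-s)h\pi(1-\bar{\pi}^s)$ are term-for-term the paper's expectations, combined by the same algebra and followed by the same limiting argument. The only difference is presentational: you organize the computation as linearity of expectation over per-packet membership probabilities, while the paper first decomposes the intersection as a disjoint union of sets and then takes expectations of the three cardinalities.
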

We provide a proof of Theorem~\ref{thm:union_overlap} in Appendix
\ref{app:union_overlap}.

\subsection{An Analysis of Overhead Based on Mean Values}
We next describe an analysis of the expected number of coded packets
a receiver needs to collect in order to decode all $N$ information
packets of $\mathcal{F}$ when they are encoded by the random annex
code. We base our analysis on Theorem \ref{thm:union_overlap} above
and Claim \ref{thm:wait} and Theorem \ref{thm:gen_wait} below, and
use the mean value for every quantity involved.

By the time when $s(s=0,1,\dots,n-1)$ generations have been decoded,
for any one of the remaining $(n-s)$ generations, on the average
$\Omega(s)$ of its participating information packets have been
decoded, or $(g-\Omega(s))$ of them are not yet resolved. If the
coded packets collected from some generation are enough for decoding
its unresolved packets, that generation becomes the $(s+1)$th
decoded one; otherwise, if no such generation exists, decoding
fails.


The following Claim \ref{thm:wait} estimates the number of coded
packets needed from a generation for its decoding when $(g-x)$ of
its information packets remain to be resolved.




\begin{claim}\label{thm:wait}
For any generation $G_i$, if $x$ of the $g=h+l$ information
packets of $G_i$ have been resolved by decoding other generations,
then, 
the expected value of the number of coded packets $N_i(g,x)$ from
$G_i$ needed to decode the remaining $(g-x)$ information packets
\begin{equation}\label{eq:wait}
E[N_{i}(g,x)]\lessapprox
g-x+\frac{q^{-1}}{1-q^{-1}}+\log_q\frac{1-q^{-(g-x)}}{1-q^{-1}}=\eta(x).
\end{equation}
\end{claim}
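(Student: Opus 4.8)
The plan is to reduce the claim to the classical question of how many uniformly random vectors over $GF(q)$ are needed to span a vector space, and then to bound the resulting sum in closed form. First I would fix a generation $G_i$ and suppose that $x$ of its $g=h+l$ participating packets are already known from the decoding of other generations. After substituting these known packets into any coded packet drawn from $G_i$, what remains is a linear equation in the $m:=g-x$ still-unknown packets whose coefficient vector is the restriction of the length-$g$ coding vector $e$ to the $m$ unknown coordinates. Since $e$ is chosen with each coordinate independently and uniformly from $GF(q)$, this restriction is uniformly distributed over $GF(q)^m$. Thus $N_i(g,x)$ has the same distribution as the number of independent uniform draws from $GF(q)^m$ required for the drawn vectors to attain full rank $m$, i.e.\ to span $GF(q)^m$.

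Next I would compute this expectation exactly by a rank-increment decomposition. Write $N_i(g,x)=\sum_{k=0}^{m-1}T_k$, where $T_k$ counts the additional draws needed to raise the rank of the collected vectors from $k$ to $k+1$. When the current span has dimension $k$, it contains $q^k$ of the $q^m$ points, so a fresh uniform draw is linearly independent of the current collection with probability $1-q^{k-m}$; hence $T_k$ is geometric with that success probability and $E[T_k]=1/\left(1-q^{-(m-k)}\right)$. Summing and re-indexing by $j=m-k$ gives the exact identity $E[N_i(g,x)]=\sum_{j=1}^{m}\frac{1}{1-q^{-j}}$.

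Finally I would put this sum into the stated closed form. Using $\frac{1}{1-q^{-j}}=1+\frac{q^{-j}}{1-q^{-j}}$ yields $E[N_i(g,x)]=m+\sum_{j=1}^{m}\frac{q^{-j}}{1-q^{-j}}$. I would peel off the $j=1$ term, which contributes exactly $\frac{q^{-1}}{1-q^{-1}}$, and bound the remaining tail by comparison with an integral: the function $t\mapsto \frac{q^{-t}}{1-q^{-t}}$ is positive and decreasing, so $\sum_{j=2}^{m}\frac{q^{-j}}{1-q^{-j}}\le\int_{1}^{m}\frac{q^{-t}}{1-q^{-t}}\,dt=\log_q\frac{1-q^{-m}}{1-q^{-1}}$, where the antiderivative $\log_q(1-q^{-t})$ follows from the substitution $u=1-q^{-t}$. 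Substituting $m=g-x$ and combining the three pieces reproduces $\eta(x)$, so the bound in fact holds as a genuine inequality on the exact expectation; the symbol $\lessapprox$ records only the slack introduced by the integral relaxation.

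The main obstacle, and the only place where care is really needed, is the justification that the restricted coefficient vectors are exactly uniform and mutually independent, so that the clean geometric model applies; everything after that is a deterministic sum manipulation. A secondary point to get right is the direction of the integral comparison, which is valid precisely because the summand is monotone decreasing, and which is what makes the closed form an upper rather than a lower estimate.
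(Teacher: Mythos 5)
Your proposal is correct and follows essentially the same route as the paper: both reduce the problem to drawing i.i.d.\ uniform vectors from $GF(q)^{g-x}$ until they span the space (the paper's step $N_i(g,x)=N_i(g-x,0)$), both use the rank-increment/geometric decomposition to get the exact sum $\sum_{j=1}^{g-x}\frac{1}{1-q^{-j}}$, and both obtain the closed form $\eta(x)$ by a monotone sum-to-integral comparison, yielding the identical bound. The only difference is cosmetic bookkeeping: you split off the constant $1$ from each term and peel the $j=1$ summand before integrating, whereas the paper bounds the sum directly by $\int_0^{g-x-1}\frac{dy}{1-q^{y-g+x}}$ plus the last term; your write-up is arguably cleaner in making explicit that the result is a true inequality (with $\lessapprox$ only reflecting the integral slack), a point the paper leaves implicit.
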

\noindent \begin{proof} In all the coding vectors, remove the
entries corresponding to the information packets already resolved.
Now we need to solve a system of linear equations of $(g-x)$
unknowns with all coefficients chosen uniformly at random from
$GF(q)$. Thus, $N_i(g,x)=N_i(g-x,0)$.
\begin{align}
E[N_{i}(g,x)]&=\sum_{j=0}^{g-x-1}\left(\frac{q^{g-x}-q^{j}}{q^{g-x}}\right)^{-1}\\
&=\int_{0}^{g-x-1} \frac{1}{1-q^{y-g+x}} dy + \frac{1}{1-q^{-1}} \\
&\lessapprox
g-x+\frac{q^{-1}}{1-q^{-1}}+\log_q\frac{1-q^{-(g-x)}}{1-q^{-1}}
\end{align}
\end{proof}


Extending the domain of $\eta(x)$ from integers to real numbers, we
can estimate that the number of coded packets needed for the
$(s+1)$th decoded generation should exceed $m'_s =\lceil
\eta(\Omega(s))\rceil$. Since in the random annex code, all
generations are randomly scheduled with equal probability, for
successful decoding, we would like to have at least $m'_0$ coded
packets belonging to one of the generations, at least $m'_{1}$
belonging to another, and so on. We wish to estimate the total
number of coded packets needed to achieve the above.

For any $m\in\mathbb{N}$, we define $S_m(x)$ as follows:
\begin{align}\label{eq:sm_m}
S_m(x)=&1+\frac{x}{1!}+\frac{x^2}{2!}+\dots+\frac{x^{m-1}}{(m-1)!}\quad(m\ge
1)\\
\label{eq:sm_0}S_{\infty}(x)=& \exp(x) ~\text{and} ~ S_{0}(x)=0.
\end{align}
The following Theorem~\ref{thm:gen_wait}, which is a restatement of
Theorem 2 from \cite{nonOverlapping} using the terminology of coding
over generations, provides a way to estimate the expected number of
coded packets necessary for decoding the whole file $\mathcal{F}$.

\begin{theorem}(Theorem 2, \cite{nonOverlapping})\label{thm:gen_wait} Suppose for some $A\in\mathbb{N}$, integers $k_1,\dots,k_A$ and
$m_1,\dots,m_A$ satisfy $1\le k_1<\dots<k_A\le n$ and
$\infty=m_0>m_1>\dots> m_A>m_{A+1}=0$. Let $\mu_{r}$ be the number
of generations for which at least $r$ coded packets have been
collected. Then, the expected number of coded packets necessary to
simultaneously have $\mu_{m_j}\ge k_j$ for all $j=1,2,\dots,A$ is
\begin{align}\label{eq:gen_wait}
&n\int_{0}^{\infty}\Bigl\{e^{nx}-\\
&\sum_{{{(i_1,i_2,\dots,i_A):\atop i_0=0,i_{A+1}=n}\atop k_j\le
i_j\le i_{j+1}}\atop j=1,2,\dots,A}\!\!
\prod_{j=0}^{A}{{i_{j+1}}\!\!\choose{i_j}}\Bigl[S_{m_{j}}(x)-S_{m_{j+1}}(x)\Bigr]^{i_{j+1}-i_{j}}\Bigr\}e^{-nx}dx\notag
\end{align}
\end{theorem}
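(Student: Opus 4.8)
The plan is to prove this by \emph{Poissonization} combined with Wald's identity, turning the discrete stopping-time expectation into a continuous-time calculation in which the per-generation counts become independent Poisson variables. First I would embed the uniform random scheduling into a Poisson arrival process of total rate $n$; since each of the $n$ generations is selected with equal probability, generation $i$ then receives coded packets according to an independent Poisson process of rate $1$, so that at continuous time $x$ the number of packets collected from generation $i$ is $N_i(x)\sim\text{Poisson}(x)$, independently across $i$. Let $W$ be the first time the stopping condition $\{\mu_{m_j}\ge k_j\ \forall j\}$ is met and let $T$ be the number of arrivals up to $W$. Because the inter-arrival times are i.i.d.\ with mean $1/n$ and $T$ is a stopping time, Wald's identity gives $E[T]=n\,E[W]$, and $E[W]=\int_0^\infty P(W>x)\,dx$.

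The next step is to rewrite $P(W>x)$. Each count $N_i(x)$ is non-decreasing in $x$, hence every $\mu_r(x)$ is non-decreasing and the stopping condition, once met, stays met; therefore $\{W>x\}$ is exactly the event that not all conditions hold at time $x$, so $P(W>x)=1-P\bigl(\bigcap_{j=1}^A\{\mu_{m_j}(x)\ge k_j\}\bigr)$. Writing $1=e^{nx}e^{-nx}$ produces the $e^{nx}e^{-nx}$ term of \eqref{eq:gen_wait}, and it then remains to show that the joint probability equals $e^{-nx}$ times the displayed sum.

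To evaluate that joint probability I would partition the generations into $A+1$ \emph{bands} using the thresholds $\infty=m_0>m_1>\cdots>m_A>m_{A+1}=0$: generation $i$ lies in band $j$ (for $0\le j\le A$) when $m_{j+1}\le N_i(x)\le m_j-1$. Since $e^{-x}S_m(x)=P(\text{Poisson}(x)<m)$, the probability a given generation falls in band $j$ is $p_j=e^{-x}\bigl[S_{m_j}(x)-S_{m_{j+1}}(x)\bigr]$, and these $p_j$ sum to $1$. Letting $c_j$ count the generations in band $j$, the vector $(c_0,\dots,c_A)$ is $\text{Multinomial}(n;p_0,\dots,p_A)$, and the partial sums $i_j:=\sum_{l<j}c_l$ satisfy $i_0=0$, $i_{A+1}=n$, and $\mu_{m_j}(x)=i_j$. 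Thus $\{\mu_{m_j}\ge k_j\ \forall j\}$ is precisely the collection of band profiles with $k_j\le i_j\le i_{j+1}$, and summing the multinomial mass over this region gives the joint probability.

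Finally I would convert the multinomial coefficient into the product of binomials of \eqref{eq:gen_wait}: with $c_j=i_{j+1}-i_j$, a telescoping cancellation yields $\prod_{j=0}^A\binom{i_{j+1}}{i_j}=n!/\prod_{j=0}^A c_j!$, while the band-probability factors contribute $\prod_j p_j^{c_j}=e^{-nx}\prod_{j=0}^A\bigl[S_{m_j}(x)-S_{m_{j+1}}(x)\bigr]^{i_{j+1}-i_j}$, the $e^{-x}$'s fusing into $e^{-nx}$ since $\sum_j c_j=n$. Assembling these pieces reproduces \eqref{eq:gen_wait} exactly; as a sanity check, the special case $A=1$, $m_1=1$, $k_1=n$ collapses the sum to $(e^x-1)^n$ and recovers the classical coupon-collector value $nH_n$. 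The main obstacle is not the algebra but the clean justification of the Poissonization: verifying $E[T]<\infty$, that $T$ is a genuine stopping time so Wald's identity applies, and that monotonicity lets me identify $\{W>x\}$ with the complement of the decoding condition. Once those are in place, the multinomial band decomposition and the binomial telescoping are routine.
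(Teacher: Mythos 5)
Your proof is correct in every step: the Poissonization via a rate-$n$ Poisson arrival process with Wald's identity, the identification of $\{W>x\}$ with failure of the (monotone) stopping condition, the band decomposition with multinomial probabilities $p_j=e^{-x}\bigl[S_{m_j}(x)-S_{m_{j+1}}(x)\bigr]$, and the telescoping identity $\prod_{j=0}^{A}\binom{i_{j+1}}{i_j}=n!/\prod_{j=0}^{A}(i_{j+1}-i_j)!$ all hold, and your $A=1$ coupon-collector sanity check comes out right. Be aware, though, that this paper contains no proof to compare against: the theorem is imported verbatim as Theorem 2 of the authors' earlier work \cite{nonOverlapping}, and the only related material here is Appendix B, which gives a recursive scheme for numerically evaluating the integrand, not a derivation. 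Your Poissonization-plus-multinomial-band argument is the standard route for results of this Newman--Shepp (double dixie cup) type and is almost certainly the argument of the cited source, so your write-up stands as a correct, self-contained proof of a statement this paper leaves to a reference.
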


A practical method to evaluate (\ref{eq:gen_wait}) is provided
in Appendix \ref{app:eval}. The computational complexity for one
evaluation of the integrand is $\mathcal{O}(An^2)$ given
$m_1=\mathcal{O}(An^2)$.
\\

The algorithm for our heuristic analysis is listed as follows:
\begin{enumerate}
\item Compute $\Omega(s-1)$ for $s=1,2,\dots,n$ using Theorem \ref{thm:union_overlap};
\item Compute $m_{s}^{\prime}=\lceil \eta(\Omega(s-1))\rceil$ for
$s=1,2,\dots,n$ using (\ref{eq:wait}) from Claim \ref{thm:wait};
\item \label{enum:cat}Map $m_s^{\prime}(s=1,2,\dots,n)$ into $A$ values $m_j(j=1,2,\dots,A)$ so that
$m_j=m_{k_{j-1}+1}^{\prime}=m_{k_{j-1}+2}^{\prime}=\dots=m_{k_{j}}^{\prime}
$, for $j=1,2,\dots,A$, $m_1>m_2>\dots>m_A>m_{A+1}=0$, $k_0=0$ and
$k_A=n$;
\item Evaluate (\ref{eq:gen_wait}) in Theorem \ref{thm:gen_wait} with
the $A$, $k_j$s, and $m_j$s obtained in Step \ref{enum:cat}), as an
estimate for the expected number of coded packets needed for
successful decoding.
\end{enumerate}
\begin{remark}
The above Step \ref{enum:cat}) is viable because $\Omega(s)$ is
nondecreasing in $s$, the righthand side of (\ref{eq:wait}) is
non-increasing in $x$ for fixed $g$, and thus $m_s^{\prime}$ is
non-increasing in $s$.
\end{remark}

Although our analysis is heuristic, we will see in the next section
that the estimate closely follows the simulated average performance
curve of the random annex coding scheme.

\section{Numerical Evaluation and Simulation Results}
\label{sec:results}
\subsection{Throughput vs.\ Complexity in Fixed Number of Generations Schemes}
Our goal here is to find out how the annex size $l$ affects the throughput performance
of the scheme with fixed base generation size $h$ and the total
number of information packets $N$ (and consequently,
the number of generations $n$). Note that we may be trading throughput for complexity
since the generation size $g=h+l$ affects the computational complexity of the scheme.

Figure \ref{fig:fixnh} shows the analytical and simulation results
when the total number $N$ of information packets is $1000$ and the
base generation size $h$ is $25$. Figure
\ref{fig:fixnh}\subref{subfig:fixhdist} shows $h+l-\Omega(s)$ for
$s=0,1,\dots,n$ with different annex sizes. Recall that $\Omega(s)$
is the expected size of the overlap of the union of $s$ generations
with any one of the rest $n-s$ generations. After the decoding of
$s$ generations, for any generation not yet decoded, the expected
number of information packets that still need to be resolved is then
$h+l-\Omega(s)$. We observe that the $h+l-\Omega(s)$ curves start
from $h+l$ for $s=0$ and gradually descends, ending somewhere above
$h-l$, for $s=n-1$.

Recall that we measure throughput by the number of coded packets
necessary for successful decoding. Figure
\ref{fig:fixnh}\subref{subfig:fixhexp} shows the expected
performance of the random annex code, along with the head-to-toe
overlapping code and the non-overlapping code($l=0$). Figure
\ref{fig:fixnh}\subref{subfig:fixhpe} shows the probability of
decoding failure of these codes versus the number of coded packets
collected.

\begin{figure}[h]
\begin{center}
\subfigure[]{\label{subfig:fixhdist}\includegraphics[scale=0.45]{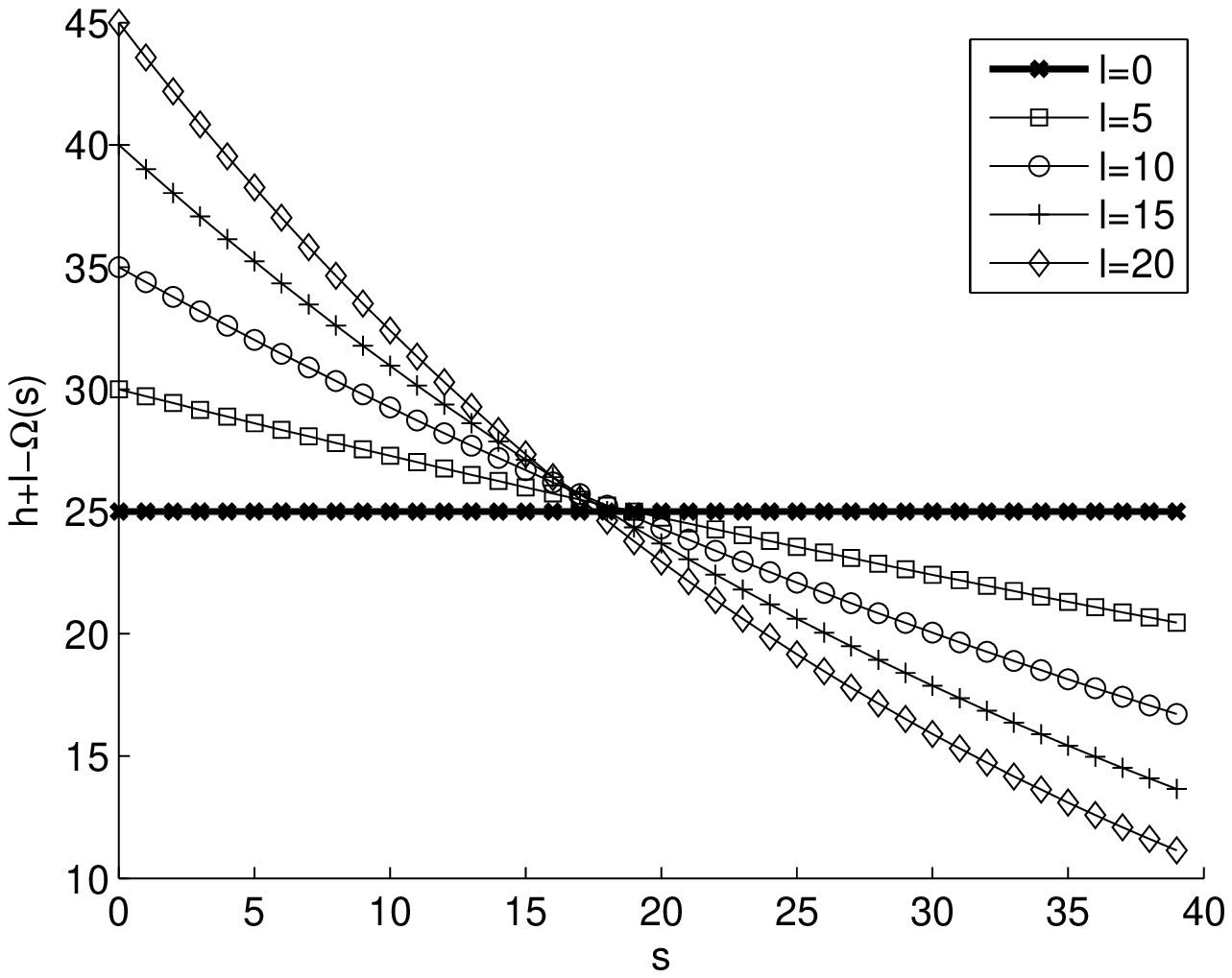}}\qquad
\subfigure[]{\label{subfig:fixhexp}\includegraphics[scale=0.5]{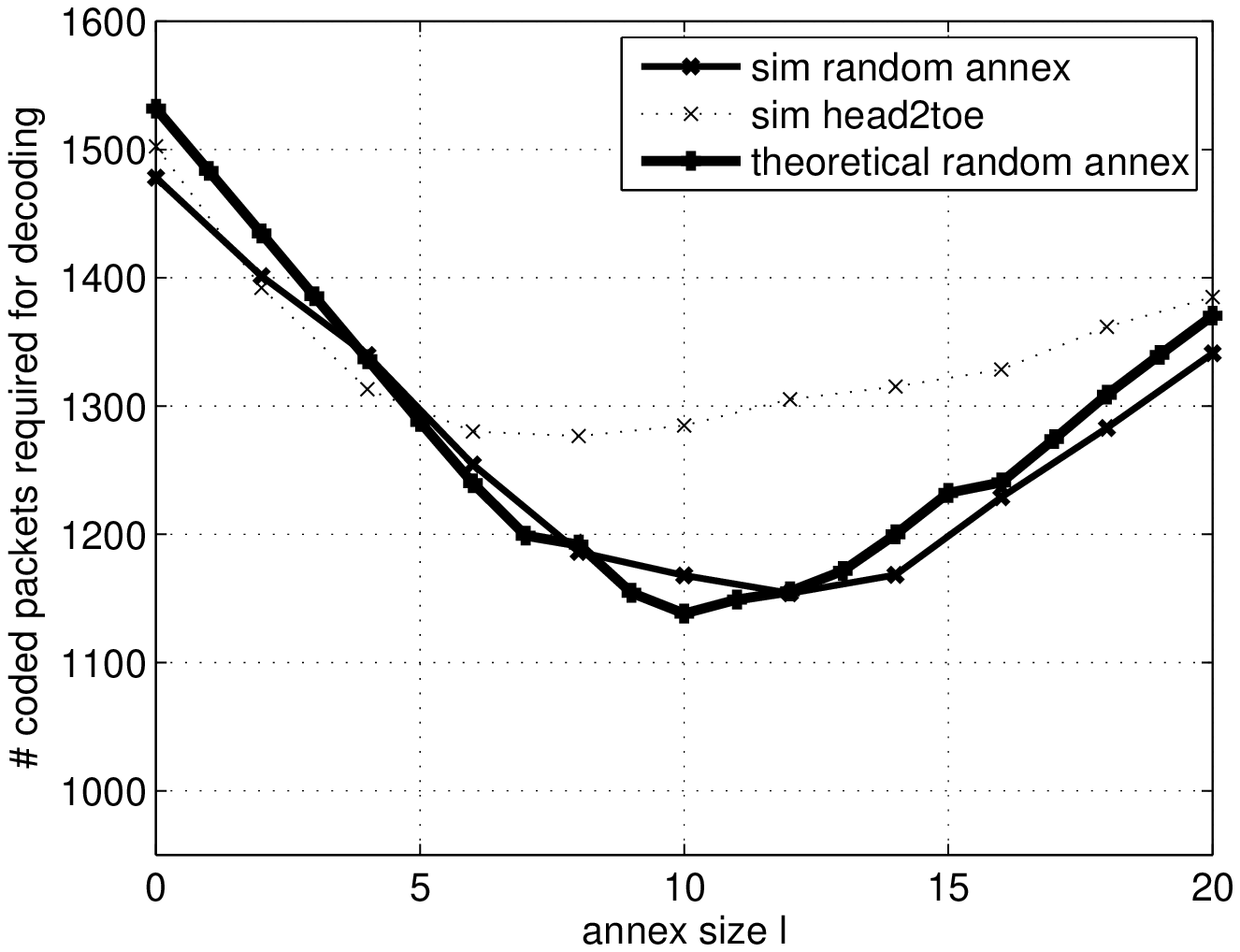}}\qquad
\subfigure[]{\label{subfig:fixhpe}\includegraphics[scale=0.5]{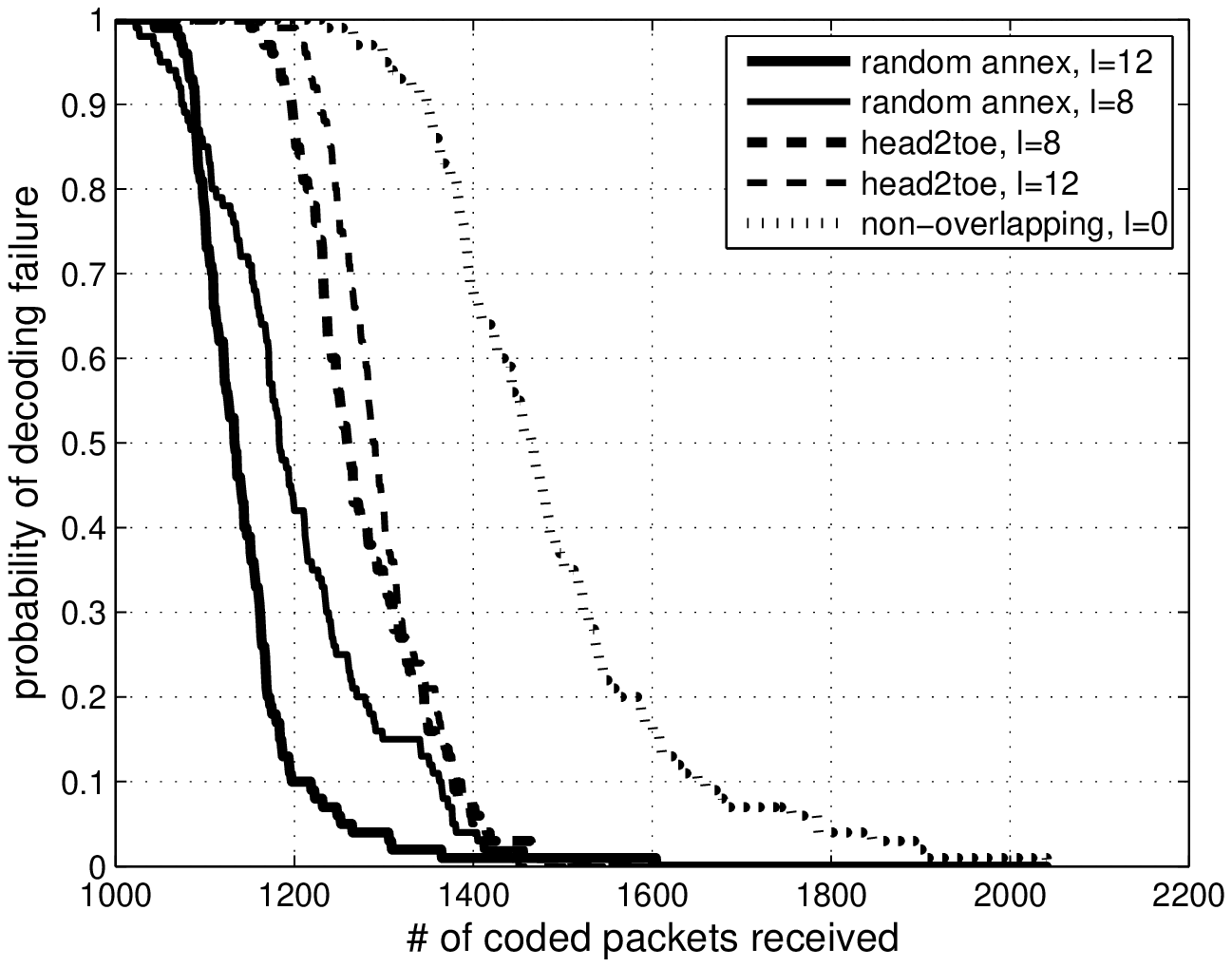}}
\caption{$N=1000$, $h=25$: \subref{subfig:fixhdist} $h+l-\Omega(s)$;
\subref{subfig:fixhexp} Expected Number of Coded Packets Needed for
Successful Decoding Versus Annex Size $l$; \subref{subfig:fixhpe}
Probability of Decoding Failure }\label{fig:fixnh}
\end{center}
\end{figure}
\begin{itemize}
\item Our analysis for the expected number of coded packets required
for successful decoding extremely closely matches the simulation results.
\item For both the random annex scheme and the head-to-toe
scheme, there is an optimal annex size, beyond or below which
throughput is lower than optimal. From the simulation results in
Figure \ref{fig:fixnh}\subref{subfig:fixhexp}, it is observed that
the optimal annex size is $12$ for the random annex scheme and $8$
for the head-to-toe scheme. Beyond the optimal annex size,
throughput cannot be increased by raising computational cost.
\item The random annex code outperforms head-to-toe overlapping at
their respective optimal points. Both codes outperforms the
non-overlapping scheme.
\item We also plotted the probability of decoding failure versus the
number of coded packets received. The probability of decoding
failure of the random annex code converges faster than those of the
head-to-toe and the non-overlapping scheme.
\end{itemize}

\subsection{Enhancing Throughput in Fixed Complexity Schemes}
Our goal here is to see how we can choose the annex size that
optimizes the throughput with negligible sacrifice in complexity.
To this end, we fix the extended generation size $g=h+l$ and vary only the annex size $l$.
Consequently, the computational complexity for coding remains roughly constant (actually
decreases with growing $l$).

Figure \ref{fig:fixgN} shows the analytical and simulation results
for the code performance when the total number $N$ of information
packets is fixed at $1000$ and size $g$ of extended generation fixed
at $25$.
\begin{figure}[h]
\begin{center}
\subfigure[]{\label{subfig:fixgexp}\includegraphics[scale=0.5]{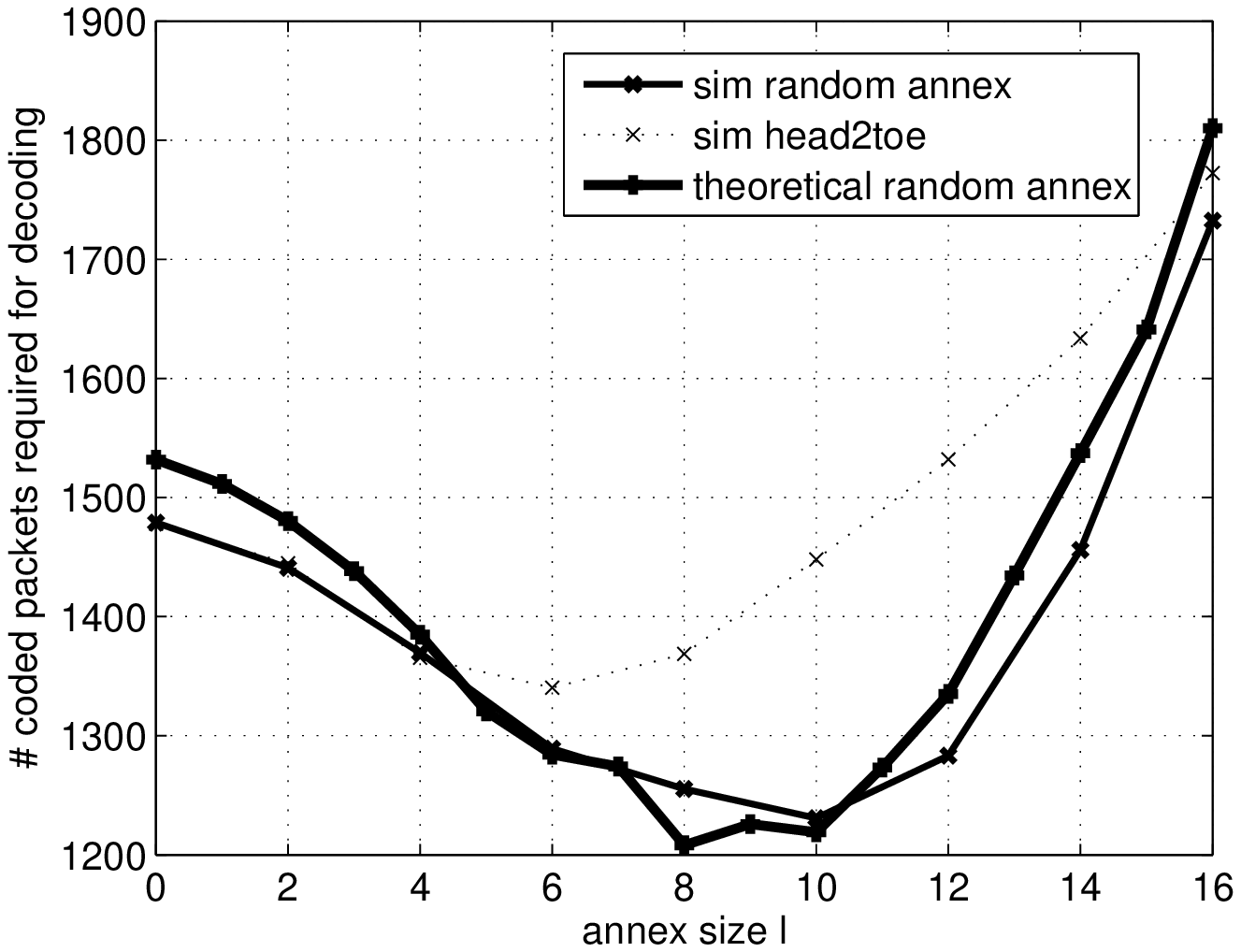}}\qquad
\subfigure[]{\label{subfig:fixgpe}\includegraphics[scale=0.5]{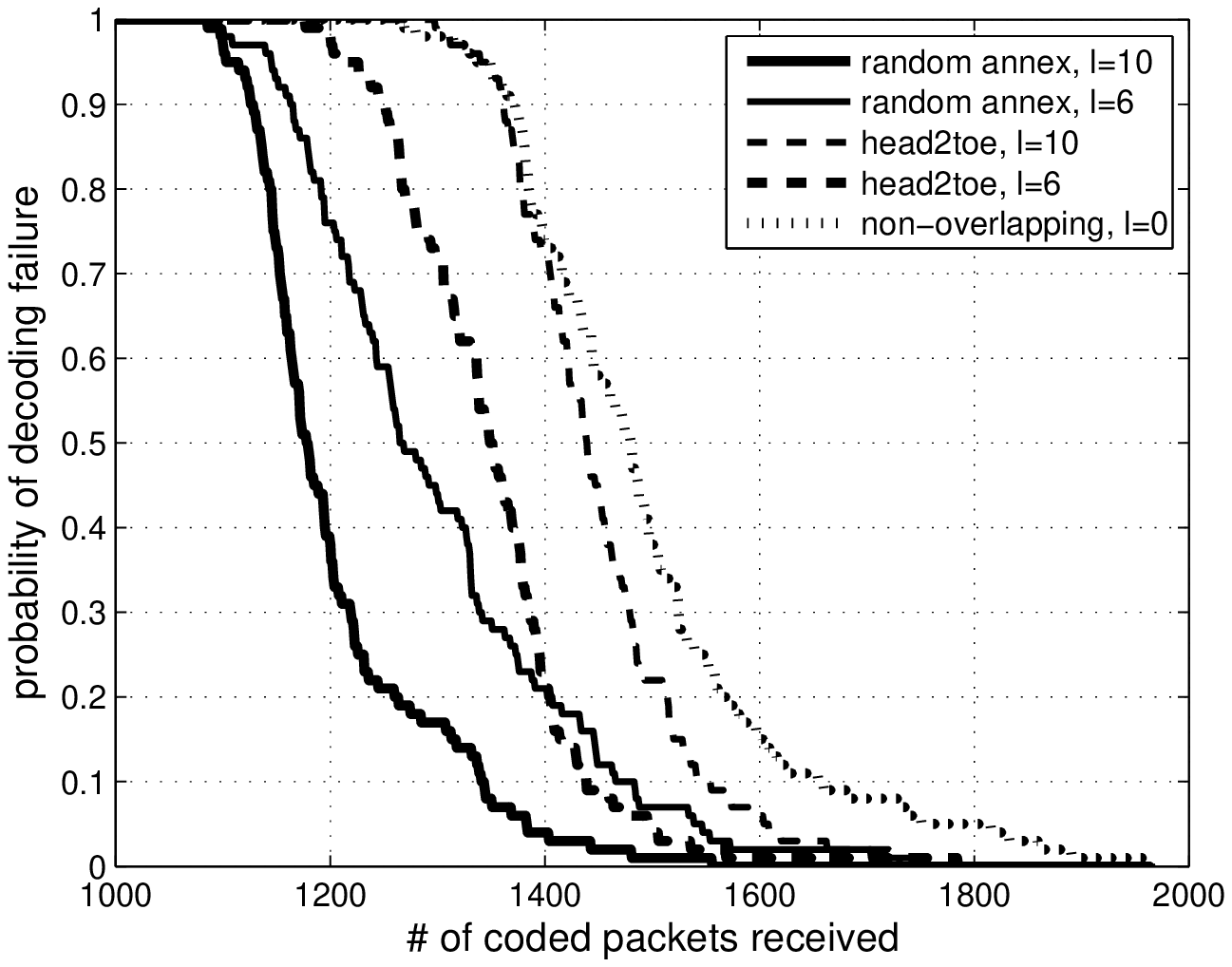}}
\caption{$N=1000$, $g=h+l=25$: \subref{subfig:fixgexp} Expected
Number of Coded Packets Needed for Successful Decoding Versus Annex
Size $l$; \subref{subfig:fixgpe} Probability of Decoding Failure
}\label{fig:fixgN}
\end{center}
\end{figure}

\begin{itemize}
\item Again our analytical results agree with simulation results very well;
\item It is interesting to observe that, without raising computational complexity, increasing annex size properly can still
give non-negligible improvement to throughput.  There is still an
optimal annex size that achieves highest throughput. From Figure
\ref{fig:fixgN}\subref{subfig:fixgexp}, we see that the optimal
annex size is $10$ for the random annex scheme and $6$ for the
head-to-toe scheme;
\item The random annex code again outperforms head-to-toe overlapping at
their optimal points. Both codes outperform the non-overlapping
scheme;
\item We again observe that the probability of decoding
failure of the random annex code converges faster than those of the
head-to-toe and the non-overlapping schemes.
\end{itemize}

\section{Conclusion And Future Work}\label{sec:conclusion}
We proposed a random annex scheme for coding with overlapping
generations. We obtained an accurate analytical evaluation of its
expected throughput performance for a unicast link using an
extension of the coupon collector's model derived in our recent work
\cite{nonOverlapping}. Both the expected throughput and the
probability of decoding failure of the random annex code are
generally better than those of the non-overlapping and head-to-toe
overlapping coding schemes. Under fixed information length and fixed
number of generations, there exists an optimal annex size that
minimizes the number of coded packets needed for successful
decoding. One of our most interesting findings is that when we fix
the information length and the generation size, increasing the annex
size may still improve throughput without raising computational
complexity.

We developed a practical algorithm to numerically evaluate some of
our complex analytical expressions. With slight modification of the
analytical method used in this work, we can also predict the
expected decoding progress, i.e., the number of
generations/information packets decodable with the accumulation of
coded packets. This will be useful to studies of content
distribution with tiered reconstruction at the user. It can also be
used to find the best rate of a ``precode'' \cite{petarchunked}
applied to coding over generations. It would be interesting to know
if the combination of overlapping generations and precode can
further improve code throughput.

It is also interesting to study the asymptotic performance of the
code, as the information length tends to infinity. We also hope to
characterize the optimal annex size in terms of generation size and
number of generations.

\appendices
\section{Proof of Theorem \ref{thm:union_overlap}}
\label{app:union_overlap}
 Without loss of
generality, let $I=\{1,2,\dots,s\}$ and $j=s+1$, and define
$\mathcal{R}_s = \cup_{i=1}^{s} R_i$, $\mathcal{B}_s=\cup_{i=1}^s
B_i$, and $\mathcal{G}_s = \cup_{i=1}^s G_i$ for $s=0,1,\dots,n-1$.
Then, $E\left[|\left(\cup_{i\in I} G_i\right)\cap
G_j|\right]=E\left[|\mathcal{G}_s \cap G_{s+1} |\right]$. For any
two sets $X$ and $Y$, we use $X+Y$ to denote $X\cup Y$ when $X\cap
Y=\emptyset$.
\begin{align*}
\mathcal{G}_s \cap G_{s+1} =& (\mathcal{B}_s + \mathcal{R}_s\backslash \mathcal{B}_s)\cap(B_{s+1}+R_{s+1})\\
=&\mathcal{B}_s\cap R_{s+1} + \mathcal{R}_s\cap B_{s+1} +
(\mathcal{R}_s\backslash \mathcal{B}_s)\cap R_{s+1}
\end{align*}
And therefore
\begin{align}\label{eq:part}
E[|\mathcal{G}_s \cap  G_{s+1}|]= & E[|\mathcal{B}_s\cap R_{s+1}|]+\\ &
E[|\mathcal{R}_s\cap B_{s+1}|] + E[|(\mathcal{R}_s\backslash
\mathcal{B}_s)\cap R_{s+1}|]\notag
\end{align}
Using Claim \ref{thm:pi}, we have
\begin{align}
&E[|\mathcal{B}_s\cap R_{s+1}|]=sh\pi,\label{eq:exp1}\\
&E[|\mathcal{R}_s\cap B_{s+1}|]=h[1-(1-\pi)^s],\label{eq:exp2}\\
&E[|(\mathcal{R}_s\backslash \mathcal{B}_s)\cap
R_{s+1}|]=(n-s-1)h\pi[1-(1-\pi)^s],\label{eq:exp3}
\end{align}
where $\pi$ is as defined in Claim \ref{thm:pi}.
 Bringing (\ref{eq:exp1})-(\ref{eq:exp3}) into (\ref{eq:part})
we obtain (\ref{eq:union_overlap}).




Furthermore, when $n\rightarrow\infty$, if $l/h\rightarrow\alpha$
and $s/n\rightarrow\beta$, then
\begin{align*}
E[|\mathcal{G}_s \cap G_{s+1}|] =& g\cdot \left[1-\bar{\pi}^s\right]
+  s h \cdot \pi \bar{\pi}^s\\
\rightarrow&h(1+\alpha)
\Bigl[1-\Bigl(1-\frac{\alpha}{n-1}\Bigr)^{n\beta}\Bigr]+\\&h\alpha\beta
\Bigl(1-\frac{\alpha}{n-1}\Bigr)^{n\beta}\notag\\
\rightarrow& h\Bigl[(1+\alpha)(1-e^{-\alpha\beta})+\alpha
\beta e^{-\alpha\beta}\Bigr]\notag\\
=& h\Bigl[1+\alpha -
(1+\alpha-\alpha\beta)e^{-\alpha\beta}\Bigr]\notag
\end{align*}

\section{Evaluation of Expression (\ref{eq:gen_wait})}
\label{app:eval} We give here a method to calculate the integrand in
(\ref{eq:gen_wait}) of Theorem \ref{thm:gen_wait}. The integrand of
(\ref{eq:gen_wait}) can be rewritten as
\begin{align}\label{eq:sums}
&1-\sum_{i_A=k_A}^{n}{n\choose
i_A}[(S_{m_A}(x)-S_{m_{A+1}}(x))e^{-x}]^{n-i_A}\\ \notag
&\cdot\sum_{i_{A-1}=k_{A-1}}^{i_A}{i_A\choose
i_{A-1}}[(S_{m_{A-1}}(x)-S_{m_{A}}(x))e^{-x}]^{i_A-i_{A-1}}\\
\notag &\dots\sum_{i_1=k_1}^{i_2}{i_2\choose
i_1}[(S_{m_{1}}(x)-S_{m_{2}}(x))e^{-x}]^{i_2-i_1}\\
\notag &\cdot{i_1\choose
i_0}[(S_{m_0}(x)-S_{m_1}(x))e^{-x}]^{i_1-i_0}.
\end{align}
For $k=k_{1}, k_{1}+1, \dots, n$, let
\begin{align*}
&\phi_{0,k}(x)=[(S_{m_0}(x)-S_{m_1}(x))e^{-x}]^{k};
\end{align*}
For each $j=1,2,\dots,A$, let
\begin{align*}
&\phi_{j,k}(x)=\sum_{w=k_j}^{k}{k\choose
w}[(S_{m_{j}}(x)-S_{m_{j+1}}(x))e^{-x}]^{k-w}\phi_{j-1,w}(x),\\
&\textnormal{for }  k=k_{j+1},k_{j+1}+1,\dots,n.
\end{align*}
Then, one can verify that (\ref{eq:sums}) is exactly
$1-\phi_{A,n}(x)$.

It is not hard to find an algorithm that calculates
$1-\phi_{A,n}(x)$ in
$(c_1m_1+c_2(n-1)+c_3\sum_{j=1}^A\sum_{k=k_{j+1}}^n(k-k_j))$ time,
where $c_1$, $c_2$ and $c_3$ are positive constants. As long as
$m_1=\mathcal{O}(An^2)$, we can estimate the amount of work for a
single evaluation of the integrand of (\ref{eq:sm_m}) in Theorem
\ref{thm:gen_wait} to be $\mathcal{O}(An^2)$.


The integral can be computed through the use of some efficient
quadrature method, for example, Gauss-Laguerre quadrature. For
reference, some numerical integration issues of the special case
where $A=1$ have been addressed in Part 7 of \cite{Flajolet1992207}
and in \cite{couponrevisited}.




\section*{Acknowledgements}
Special thanks to Meiyue Shao and Professor Richard S. Falk for
useful discussions concerning numerical analysis issues.


\bibliographystyle{IEEEtran}
\bibliography{netcodrefs}

\end{document}